\documentclass[a4paper, 11pt]{article}

\usepackage[T1]{fontenc}
\usepackage{amsmath,amsthm,mathtools}
\usepackage{libertine}
\usepackage[libertine]{newtxmath}
\usepackage{fullpage}
\usepackage{booktabs}
\usepackage{microtype}
\usepackage{url}
\usepackage{tikz}
\usetikzlibrary{decorations.pathreplacing,positioning}
\usepackage[hidelinks]{hyperref}
\hypersetup{
  pdftitle={Optimal Rank Lotteries for Truthful Unit-Interval Covering},
  pdfauthor={Alexandros A. Voudouris},
  pdfsubject={Truthful interval covering and randomized approximation mechanisms}
}

\newtheorem{theorem}{Theorem}
\newtheorem{lemma}{Lemma}

\newtheorem{claim}{Claim}
\theoremstyle{remark}

\theoremstyle{definition}
\newtheorem{example}{Example}

\DeclareMathOperator{\SC}{SC}
\newcommand{\E}{\mathbb{E}}
\newcommand{\R}{\mathbb{R}}
\newcommand{\hats}{\hat s}
\newcommand{\bfs}{\mathbf{s}}
\newcommand{\bfhats}{\mathbf{\hats}}

\begin{document}

\title{\bf Optimal Rank Lotteries for Truthful Unit-Interval Covering}
\author{Alexandros A. Voudouris}

\date{
Department of Mathematics and Computer Science \\
University of Southern Denmark, Denmark}

\maketitle

\begin{abstract}
In truthful interval covering, each agent has a private interval of unit length, and the goal is to decide where to place a public unit interval so as to minimize the total uncovered length while incentivizing the agents to be truthful. Previous work proposed a universally strategyproof lottery over order-statistic mechanisms with approximation ratio at most $5/3$, and established a lower bound of $3/2-o(1)$ for this class of mechanisms. We close this gap by characterizing the approximation ratio of every report-independent lottery over order-statistics. In particular, we show that, for every $n\ge2$, the optimal approximation ratio for this class is $\frac32-\frac{1}{2\lfloor n/2\rfloor}$. Beyond rank lotteries, we show that every universally strategyproof randomized mechanism has approximation ratio at least $9/8$.
\end{abstract}

\medskip
\noindent\textbf{Keywords:} mechanism design without money; strategyproofness; approximation ratio; order statistic; interval covering

\section{Introduction}\label{sec:intro}
The truthful interval covering (TIC) problem was introduced by Deligkas et al.~\cite{deligkas2024} to model settings in which a unit-length public interval must be positioned to maximize its overlap with the privately known unit intervals of a set of agents, while incentivizing them to be truthful. A natural example is scheduling a public service that is available for a fixed period of time, when each agent has a specific availability window and benefits from the extent to which it overlaps with the chosen service window. TIC is an instance of {\em approximate mechanism design without money}, a framework introduced by Procaccia and Tennenholtz~\cite{procaccia2013} through strategic facility location problems; see the survey of Chan et al.~\cite{chan2021}.

Among other results, Deligkas et al.~\cite{deligkas2024} established a tight approximation ratio of $2-2/n$ for deterministic strategyproof mechanisms, proposed a universally strategyproof randomized mechanism, called {\sc Uniform-Statistic}, with approximation ratio at most $5/3$, and showed a lower bound of $3/2-o(1)$ for randomized mechanisms that are convex combinations of order-statistic rules. Such rules form a classical family of strategyproof mechanisms for single-peaked domains~\cite{moulin1980}. In this paper, we first determine how to optimally randomize over order-statistic rules with respect to social cost. We then move beyond rank lotteries and establish the first lower bound for general universally strategyproof randomized mechanisms.

Our first contribution is an exact characterization of the worst-case approximation ratio of every report-independent lottery over order statistics. The characterization depends only on the probability assigned to consecutive blocks of ranks, with each block value attained by a canonical profile of a simple form. The key geometric ingredient is a random shifted-grid representation of the truncated line metric, which shows that, relative to any fixed benchmark location, there exists an extremal profile in which one consecutive block of agents is located at the benchmark and all remaining agents are mutually isolated. This reduces the continuous geometric problem to an optimization over consecutive rank blocks. As a first application, we determine the exact finite-$n$ approximation ratio of the {\sc Uniform-Statistic} mechanism of Deligkas et al.~\cite{deligkas2024}, refining its asymptotically tight $5/3$ guarantee.

We then optimize the block formula over all rank distributions. For $n=2$, every rank lottery is socially optimal, whereas for $n=3$ the deterministic median rule is the unique optimal rank lottery. For every $n\ge4$, we show that the best achievable approximation ratio is
\[
\frac32-\frac{1}{2\lfloor n/2\rfloor},
\]
and give explicit central lotteries attaining this value. This matches, for every $n\ge4$, the finite-$n$ lower bound obtained from the construction underlying the $3/2-o(1)$ bound from previous work.

Finally, we move beyond rank lotteries and establish, using a simple four-agent construction, the first general lower bound of $9/8$ for universally strategyproof randomized mechanisms. Our arguments rely only on the inequalities implied by strategyproofness, without assuming anonymity or rank-basedness.

\section{Model and rank-based mechanisms}\label{sec:model}
There are $n\ge 2$ agents. Each agent $i$ has a private unit interval $I_i=[s_i,s_i+1]$. Since the length of all agent intervals is $1$, we can identify each agent by her {\em starting} location $s_i$; let $\bfs = (s_i)_i$ be the {\em location profile} of all agents.  If a {\em covering} unit interval starts at location $y$, agent $i$ incurs {\em cost}
\[
d(s_i,y)=1-|I_i\cap[y,y+1]| = \min\{1,|s_i-y|\}.
\]
The function $d$ is the usual line metric truncated at $1$ and is itself a metric. In particular, its triangle inequality follows from the fact that $\min\{1,a+b\}\le \min\{1,a\}+\min\{1,b\}$ for any $a,b \ge 0$.
The social cost of a covering location $y$ for profile $\bfs$ is 
\[
\SC(y;\bfs)=\sum_{i=1}^n d(s_i,y).
\]
We denote by $\operatorname{OPT}(\bfs)=\min_{y\in\R}\SC(y;\bfs)$ the {\em optimal} social cost achieved by any covering location for $\bfs$. 

A deterministic mechanism $f$ takes as input a location profile $\bfs$ and outputs a covering location $y = f(\bfs)$. It is  {\em strategyproof} if, for every agent $i$, true location $s_i$, alternative report $z_i$, and reports $\bfs_{-i}$ of
the other agents,
\[
d\bigl(s_i,f(s_i,\bfs_{-i})\bigr) \le d\bigl(s_i,f(z_i,\bfs_{-i})\bigr).
\]
A randomized mechanism $M$ outputs a probability distribution over all possible covering locations. It is \emph{universally strategyproof} if it is a report-independent lottery over deterministic strategyproof mechanisms.
Its approximation ratio is
\[
\rho(M)=\sup_{\bfs:\operatorname{OPT}(\bfs)>0}
\frac{\E[\SC(M(\bfs);\bfs)]}{\operatorname{OPT}(\bfs)}.
\]
If $\operatorname{OPT}(\bfs)=0$, all agent locations coincide and choosing any of them gives $0$ social cost. 

\paragraph{Rank lotteries}
We focus on a class of randomized mechanisms known as {\em rank lotteries}, or {\em mixtures of order-statistic mechanisms}, which are universally strategyproof~\cite{deligkas2024}. Let $s_1\le\ldots\le s_n$ be the {\em sorted} reported agent starting locations, breaking possible ties in an arbitrary but fixed way. Indices henceforth denote rank positions. Tied ranks have the same coordinate, and thus their ordering does not affect the output or the social cost.

Let $p=(p_1,\ldots,p_n)$ be a probability vector. Independently of the reports, the rank lottery mechanism $M_p$ selects rank $r$ with probability $p_r$ and outputs $s_r$. We will write 
\[
\SC_p(\bfs)=\E[\SC(M_p(\bfs);\bfs)]
\]
for the expected social cost of $M_p$ when given $\bfs$. The expected social cost also has a useful pairwise form:
\begin{align}
\SC_p(\bfs)
&=\sum_{r=1}^n p_r\sum_{i=1}^n d(s_i,s_r)\notag\\
&=\sum_{1\le i<j\le n}(p_i+p_j)\cdot d(s_i,s_j).
\label{eq:pairwise}
\end{align}
Indeed, for a fixed unordered pair of ranks $\{i,j\}$, the distance $d(s_i,s_j)$ appears only once when rank $i$ is selected and once when rank $j$ is selected. In particular, every pairwise coefficient is non-negative. This monotonicity is the basis of the extremal reduction presented in the next section. 

\section{Exact approximation formula}\label{sec:formula}
The main step towards deriving the exact approximation ratio is an extremal reduction. Given an arbitrary profile and a fixed benchmark location, we show that it suffices to consider profiles in which some consecutive block of agents is located at the benchmark, while all remaining agents are mutually isolated and at maximum truncated
distance from the benchmark. For a benchmark location $o$ and a scalar $\lambda$, we will show through the following existential lemma that the \emph{gap}
\[
\SC_p(\bfs)-\lambda\cdot \SC(o;\bfs)
\]
cannot decrease under the above reduction. Its proof is based on a simple application of the probabilistic method: a random shifted-grid rounding preserves the expected gap, and therefore there exists a deterministic realization with no smaller gap. A final deterministic separation step then transforms this realization into the desired canonical profile without decreasing the gap.

\begin{lemma}[Extremal reduction]\label{lem:reduction}
For every ordered profile $\bfs$, benchmark location $o$, and scalar $\lambda$,
there is an ordered profile $\bfhats$ with
\[
\SC_p(\bfs)-\lambda\cdot \SC(o;\bfs)
\le
\SC_p(\bfhats)-\lambda\cdot \SC(o;\bfhats)
\]
such that the agents at $o$ form one consecutive rank block, possibly empty, and every other agent is at truncated distance $1$ from $o$ and from every other non-central agent. Throughout the transformation, each agent retains the same global rank.
\end{lemma}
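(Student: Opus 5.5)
The plan is to linearize the gap in the truncated distances using the pairwise form \eqref{eq:pairwise}, average over a random rounding of the line, and then pass to a good realization. For a fixed assignment of ranks, the gap
\[
G(\bfs)=\sum_{1\le i<j\le n}(p_i+p_j)\, d(s_i,s_j)-\lambda\sum_{i=1}^n d(s_i,o)
\]
is a linear function of the pairwise truncated distances among $s_1,\dots,s_n,o$. Only the coefficients $p_i+p_j\ge0$ have a sign constraint; $\lambda$ is arbitrary. So it suffices to find a random ``profile'' whose expected pairwise distances equal the original ones, and whose realizations are canonical up to a final cleanup.

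\emph{Rounding step.} Draw $\theta$ uniformly from $[0,1)$ and consider the grid $\theta+\mathbb{Z}$, which cuts the line into half-open cells of length $1$. For any two points $x,y$, the probability that some grid point separates them is exactly $\min\{1,|x-y|\}=d(x,y)$. Given $\theta$, define a rounded profile as follows.
\begin{itemize}
\item Every agent lying in the cell containing $o$ is moved to $o$.
\item The remaining cells, listed in their left-to-right order, are placed at $o\pm 2,o\pm4,\dots$: cells to the left of $o$'s cell go to negative offsets and cells to the right to positive offsets, in order.
\item Every agent in such a cell is moved to that cell's point.
\end{itemize}
In the rounded profile, two agents (or an agent and $o$) are at truncated distance $0$ if they shared a cell and $1$ otherwise. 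Hence each realized distance is the indicator of separation, and its expectation over $\theta$ is the original distance.

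Because cells are intervals and are placed in their original order, the map is monotone. The sorted order of the rounded profile is therefore the original order, with ties only within a cell, so each agent keeps its global rank and \eqref{eq:pairwise} applies with the same indices. By linearity, $\E_\theta[G(\text{rounded})]=G(\bfs)$, so some $\theta$ gives a rounded profile $\bfs'$ with $G(\bfs')\ge G(\bfs)$. In $\bfs'$ the agents at $o$ are exactly those in $o$'s cell, which is a consecutive rank block, possibly empty.

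\emph{Separation step.} In $\bfs'$ a non-central cell may still contain several agents at a common point. Within each non-central cluster, spread the agents in rank order to distinct points at mutual distance at least $1$, for instance spacing $2$ between agents and between clusters (re-spacing globally if necessary). This preserves the left-to-right order, and every agent stays at distance at least $1$ from $o$. As a result, every $d(\cdot,o)$ term is unchanged and every pairwise distance weakly increases, from $0$ to $1$ or staying $1$. Since all coefficients $p_i+p_j$ are nonnegative, $G$ does not decrease, and the resulting $\bfhats$ has exactly the claimed form.

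The step I expect to need the most care is rank consistency. The pairwise form \eqref{eq:pairwise} charges each pair according to its rank indices, so linearity of the gap under averaging is valid only if every realization sorts the agents in the same order as $\bfs$. This is what the monotone, order-preserving placement of cells guarantees, together with the paper's remark that the ordering among tied ranks does not affect the cost.
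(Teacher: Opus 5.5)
Your proposal is correct and follows essentially the same route as the paper. You use a uniformly shifted unit grid whose separation probability equals the truncated distance, a monotone rank-preserving rounding so that the pairwise form \eqref{eq:pairwise} makes the gap linear, a good realization, and a final separation step that only increases distances carrying non-negative coefficients. The only cosmetic difference is that you send non-central cells to points spaced $2$ apart around $o$, whereas the paper translates $o$ to $0$ and rounds by $\lfloor x+U\rfloor$, which already puts distinct cells at truncated distance $1$.
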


\begin{proof}
To simplify our discussion, we first translate all coordinates so that $o=0$; clearly, this changes no cost. 
We now draw $U$ uniformly from $[0,1)$ and, for each coordinate $x$, define
\[
R_U(x)=\lfloor x+U\rfloor.
\]
The map $R_U$ is non-decreasing, and thus applying it coordinate-wise preserves the global rank order. If rounding creates a tie, we retain the inherited order within that tie; because all tied reports have the same coordinate, this convention does not change the mechanism's output distribution.

Now observe that, for every $x,y\in\R$,
\begin{equation}\label{eq:grid-identity}
\E_U\bigl[d(R_U(x),R_U(y))\bigr]
=d(x,y).
\end{equation}
Indeed:
\begin{itemize}
\item If $|x-y|\ge1$, the two rounded values are always distinct and
both sides of~\eqref{eq:grid-identity} are equal to $1$. 
\item If $|x-y|<1$, suppose that $x<y$. 
The rounded values differ exactly when the interval $(x+U,y+U]$ contains an integer boundary. Equivalently,
writing $\{x+U\}$ for the fractional part of $x+U$, this happens when
$\{x+U\}\ge 1-(y-x)$.
Since $\{x+U\}$ is uniformly distributed on $[0,1)$, this event has probability $|x-y|$.
\end{itemize}
Since $R_U(0)=0$, Equation~\eqref{eq:grid-identity} also gives
\[
\E_U[d(0,R_U(x))]=d(0,x).
\]
Let $R_U(\bfs)$ denote the coordinate-wise rounded profile. 
Using \eqref{eq:grid-identity} and the pairwise representation~\eqref{eq:pairwise}, we have
\begin{align*}
\E_U\bigl[
\SC_p(R_U(\bfs))-\lambda\cdot \SC(0;R_U(\bfs))
\bigr]
= \SC_p(\bfs)-\lambda\cdot \SC(0;\bfs).
\end{align*}
Consequently, there exists a realization $u$ of $U$ whose gap is at
least that of the original profile.

The agents rounded to $0$ form a consecutive rank block. Every other rounded agent lies at a non-zero integer and hence has benchmark cost $1$. Keeping their order fixed, move the non-zero prefix and suffix to mutually distinct coordinates that are at distance at least $1$ from $0$ and from one another. Every truncated pairwise distance involving a non-central agent is then $1$. Thus no pairwise distance decreases, and~\eqref{eq:pairwise} implies that $\SC_p$ cannot decrease, while the
benchmark cost is unchanged. Translating back to $o$ yields the required profile $\bfhats$.
\end{proof}

We refer to profiles having the structure described in Lemma~\ref{lem:reduction} as \emph{canonical profiles}; see Figure~\ref{fig:canonical}. For a reduced profile of this form with a non-empty proper central block $B$, selecting a rank in $B$ yields social cost $n-|B|$, whereas selecting a rank outside $B$ yields social cost $n-1$. These two possible costs lead directly to the exact approximation formula below.

\begin{figure}[t]
\centering
\begin{tikzpicture}[x=0.63cm,y=0.76cm,font=\small]
  \draw[->] (0,1.25)--(8.4,1.25);
  \foreach \x in {0.5,1.5,6.8,7.8}
    \fill (\x,1.25) circle (1.5pt);
  \foreach \dy in {-0.12,-0.06,0,0.06,0.12}
    \fill (4.15,{1.25+\dy}) circle (1.35pt);
  \draw[dashed] (4.15,0.85)--(4.15,1.72) node[above] {$o$};
  \draw[decorate,decoration={brace,amplitude=4pt,mirror}]
    (3.45,0.74)--(4.85,0.74) node[midway,below=5pt] {$B$: central ranks};
  \draw[decorate,decoration={brace,amplitude=4pt}]
    (0.35,1.76)--(1.65,1.76) node[midway,above=5pt,font=\scriptsize] {isolated};
  \draw[decorate,decoration={brace,amplitude=4pt}]
    (6.65,1.76)--(7.95,1.76) node[midway,above=5pt,font=\scriptsize] {isolated};
\end{tikzpicture}
\caption{A canonical profile: one consecutive block $B$ lies at the benchmark $o$, and all remaining ranks are mutually isolated.}
\label{fig:canonical}
\end{figure}

\begin{theorem}[Exact block formula]\label{thm:formula}
For any $n \geq 2$, let $\mathcal B_n$ denote the family of all non-empty proper consecutive
blocks of $\{1,\ldots,n\}$. For a rank distribution $p$ and
$B\in\mathcal B_n$, write $p(B)=\sum_{r\in B}p_r$
and define
\begin{equation}\label{eq:blockvalue}
\Phi_p(B) := 1+\frac{|B|-1}{n-|B|}(1-p(B)).
\end{equation}
Then
\begin{equation}\label{eq:formula}
\rho(M_p) = \max_{B\in\mathcal B_n}\Phi_p(B).
\end{equation}
Moreover, every value $\Phi_p(B)$ is attained by a canonical profile.
\end{theorem}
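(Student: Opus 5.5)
The plan is to prove the two inequalities in~\eqref{eq:formula} separately. Lemma~\ref{lem:reduction} gives the upper bound, and an explicit canonical profile for each block gives the lower bound. Write $\rho^*=\max_{B\in\mathcal B_n}\Phi_p(B)$.

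\emph{Attainment (lower bound).} Fix $B\in\mathcal B_n$, say $B=\{a,\ldots,b\}$. We build a canonical profile: ranks $a,\ldots,b$ are placed at $0$, ranks before $a$ at mutually distinct coordinates $-2,-4,\ldots$, and ranks after $b$ at $2,4,\ldots$. This ordering respects the ranks, and every pair not both in $B$ has truncated distance $1$. Selecting a rank in $B$ costs $n-|B|$, and selecting a rank outside $B$ costs $n-1$, so
\[
\SC_p(\bfs)=p(B)(n-|B|)+(1-p(B))(n-1).
\]
We then bound $\operatorname{OPT}$. Any location $y$ is within distance $<1$ of at most one of the separated groups, since consecutive groups are at distance $\ge 2$. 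Hence $\SC(y;\bfs)\ge n-\max\{|B|,1\}=n-|B|$, with equality at $y=0$, which gives $\operatorname{OPT}=n-|B|>0$ because $B$ is proper. Dividing yields
\[
\frac{\SC_p(\bfs)}{\operatorname{OPT}}=p(B)+(1-p(B))\frac{n-1}{n-|B|}=1+\frac{|B|-1}{n-|B|}(1-p(B))=\Phi_p(B).
\]
This establishes both $\rho(M_p)\ge\rho^*$ and the ``moreover'' clause.

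\emph{Upper bound.} Take any $\bfs$ with $\operatorname{OPT}(\bfs)>0$, let $o$ be an optimal location (one exists, since $\SC(\cdot;\bfs)$ is continuous and constant $n$ far away), and set $\lambda=\rho^*$. The goal is $\SC_p(\bfs)-\lambda\SC(o;\bfs)\le0$. By Lemma~\ref{lem:reduction}, it suffices to show $\SC_p(\bfhats)-\lambda\SC(o;\bfhats)\le 0$ for every canonical $\bfhats$ relative to $o$, where the central block $B$ is possibly empty or possibly everything. Three cases arise.
\begin{itemize}
\item If $B$ is empty, then $\SC_p(\bfhats)\le n-1$ and $\SC(o;\bfhats)=n$, so the gap is negative because $\lambda\ge1$ (every $\Phi_p\ge1$, and $\mathcal B_n\neq\emptyset$ since $n\ge2$).
\item If $B$ is all ranks, both terms are $0$.
\item If $B\in\mathcal B_n$, then $\SC(o;\bfhats)=n-|B|$, and by the computation above $\SC_p(\bfhats)=\Phi_p(B)(n-|B|)\le\lambda(n-|B|)$.
\end{itemize}
In every case the gap of $\bfhats$ is at most $0$, hence so is the gap of $\bfs$, giving $\SC_p(\bfs)\le\rho^*\operatorname{OPT}(\bfs)$.

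The main subtlety is the choice of benchmark. Lemma~\ref{lem:reduction} compares against $\SC(o;\cdot)$ rather than $\operatorname{OPT}$, so we must fix $o$ as an optimum for the \emph{original} $\bfs$ and use only $\SC(o;\bfhats)$ for the reduced profile; the reduced profile's own optimum is never needed. We also need to check that the degenerate cases of empty and full blocks, which the lemma allows, do not exceed $\lambda$; this is exactly the case analysis above. In the attainment direction, the only point needing care is the bound $\operatorname{OPT}=n-|B|$, which the spacing of $2$ between groups ensures.
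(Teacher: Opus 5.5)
Your proposal is correct and follows essentially the same route as the paper: the upper bound comes from Lemma~\ref{lem:reduction} with $\lambda=\max_{B\in\mathcal B_n}\Phi_p(B)$ and the same three-case analysis (empty, full, and non-empty proper central block), and attainment comes from the canonical profile with spacing $2$. The differences are cosmetic: you fix $o$ as an optimum of $\bfs$ up front and justify its existence, whereas the paper takes an arbitrary benchmark and minimizes at the end; and you bound $\operatorname{OPT}$ by observing that any $y$ is within distance $<1$ of at most one group, whereas the paper splits into the cases $|y|<1$ and $|y|\ge1$.
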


\begin{proof}
Let $\lambda:=\max_{B\in\mathcal B_n}\Phi_p(B).$
Since  $\Phi_p(B)=1$ for any singleton block $B$, we have that $\lambda\ge1$. 
Now, fix an arbitrary profile $\mathbf{s}$ and an arbitrary benchmark location $o$. 
We apply Lemma~\ref{lem:reduction} for this value of $\lambda$, and let $\mathbf{\hats}$ be the resulting reduced profile.

We first consider two extreme cases.
First, if the central block is empty, then by Lemma~\ref{lem:reduction} every agent is at truncated distance $1$ from $o$ and every pair of agents is at truncated distance $1$. Thus, $\SC(o;\mathbf{\hats})=n$. Whenever a rank is selected, the corresponding agent pays zero and all other $n-1$ agents pay one, leading to $\SC_p(\mathbf{\hats})=n-1$.
Since $\lambda \geq 1$, we obtain
\[
\SC_p(\mathbf{\hats})=n-1
\le n
\le \lambda \cdot n
=\lambda \cdot \SC(o;\mathbf{\hats}).
\]
Second, if the central block contains all agents, then both the benchmark cost and the expected cost of the mechanism are zero. Consequently, the inequality holds trivially.

It remains to consider a non-empty proper central block $B$. As observed above, the mechanism incurs cost $n-|B|$ when the selected rank lies in $B$, and cost $n-1$ otherwise. Therefore,
\begin{align}
\SC_p(\mathbf{\hats})
&=p(B)(n-|B|)+(1-p(B))(n-1)\notag\\
&=n-|B|+(1-p(B))(|B|-1).
\label{eq:blockcost}
\end{align}
Since $\SC(o;\mathbf{\hats}) = n-|B|$,  we have that 
\[
\frac{\SC_p(\mathbf{\hats})}{\SC(o;\mathbf{\hats})}
=
1+\frac{|B|-1}{n-|B|}(1-p(B))
=
\Phi_p(B)
\le\lambda.
\]
Thus, in every case, the reduced profile satisfies
\[
\SC_p(\mathbf{\hats})\le\lambda \cdot \SC(o;\mathbf{\hats}).
\]
The extremal reduction of Lemma~\ref{lem:reduction} then implies
\[
\SC_p(\mathbf{s})\le\lambda \cdot \SC(o;\mathbf{s})
\]
for the original profile $\mathbf{s}$. Since this inequality holds for every benchmark $o$, choosing
$ o\in\arg\min_{y\in\R}\SC(y;\mathbf{s})$
gives
$$
\SC_p(\mathbf{s})\le\lambda\cdot \operatorname{OPT}(\mathbf{s}).
$$
Taking the supremum over profiles with positive optimum proves the desired upper bound in~\eqref{eq:formula}.

\emph{Matching instance.}
We next present a profile $\bfs$ that attains this approximation ratio exactly. 
Fix $B\in\mathcal B_n$ and place exactly the ranks in $B$ at $0$. Place the ranks preceding $B$ to the left of $0$ and those following $B$ to the right, so that every non-central agent is at distance at least $2$ from $0$ and every pair of non-central agents is at distance at least $2$ from each other. For this canonical profile, location $0$ has cost $n-|B|$ and is optimal. Indeed, if the output location is $y$ such that $|y|<1$, then every isolated agent pays $1$, leading to a cost of at least $n-|B|$. Otherwise, if $|y|\ge1$, every central agent pays $1$ and, because the isolated agents are at pairwise distance $2$, at most one isolated agent can pay less than $1$. The cost is then at least
\[
|B|+(n-|B|-1)=n-1\ge n-|B|.
\]
Hence  $\operatorname{OPT}(\bfs) = n-|B|.$
The expected cost of the mechanism is exactly~\eqref{eq:blockcost}, and thus
\[
\frac{\SC_p(\bfs)}{\operatorname{OPT}(\bfs)}
=
1+\frac{|B|-1}{n-|B|}(1-p(B))
=
\Phi_p(B).
\]
Thus every block value is attained by a canonical profile. Maximizing over $B\in\mathcal B_n$ proves equality in~\eqref{eq:formula}.
\end{proof}


\section{Optimal rank lotteries}\label{sec:consequences}
We now use the characterization of the previous section to derive exact bounds for rank lotteries. We first illustrate the block formula by revisiting the {\sc Uniform-Statistic} mechanism of Deligkas et al.~\cite{deligkas2024}. For $n=6q$, our characterization gives its exact finite-$n$ approximation ratio, refining its asymptotically tight $5/3$ guarantee. We then determine the optimal rank lotteries, first for the special cases $n\in\{2,3\}$ and subsequently for every $n\ge4$.

\begin{example}[Recovery of the previous $5/3$ bound of \cite{deligkas2024}]\label{rem:uniform}
For simplicity, assume that $n=6q$ for some integer $q\ge1$. The {\em Uniform-Statistic} mechanism assigns probability $1/3$ to each of the three ranks $2q$, $3q$, and $4q$. Hence, for any consecutive block $B$, its probability mass $p(B)$ is determined entirely by how many of these three ranks it contains.

By Theorem~\ref{thm:formula}, the approximation ratio of
Uniform-Statistic is
\[
\rho(M_p)=\max_{B\in\mathcal B_n}\Phi_p(B), 
\]
where 
\[
\Phi_p(B) = 1+\frac{|B|-1}{6q-|B|}(1-p(B)).
\]
For a fixed value of $p(B)$, $\Phi_p(B)$ is increasing in $|B|$. Thus, for each possible value of $p(B)$, it suffices to determine the {\em largest consecutive block} having each possible probability mass and then compare the resulting values.
We distinguish four cases.
\begin{itemize}
\item Case 1: $B$ contains all three supported ranks. Then $p(B)=1$, and therefore
$\Phi_p(B)=1$. Clearly, such blocks cannot determine the worst-case approximation ratio.

\item Case 2: $B$ contains exactly two supported ranks. Since the
support is $\{2q,3q,4q\}$, the block must contain either $2q$ and $3q$
but not $4q$, or $3q$ and $4q$ but not $2q$. In the first case, the
largest possible block is $\{1,\ldots,4q-1\}$ which has size $4q-1$. 
In the second case, the largest possible block is $B^\star=\{2q+1,\ldots,6q\}$
which has size $4q=2n/3$. Thus, within this case, the maximum is attained by
$B^\star$. Since $p(B^\star)=2/3$, we obtain
\begin{align*}
\Phi_p(B^\star)
&= 1+\frac{4q-1}{2q}\left(1-\frac23\right)\\
&= \frac53-\frac{1}{6q}
= \frac53-\frac1n.
\end{align*}

\item Case 3: $B$ contains exactly one supported rank, so that
$p(B)=1/3$. If the unique supported rank is $2q$, the block can have
size at most $3q-1$; if it is $3q$, the size is at most $2q-1$; and if
it is $4q$, the size is at most $3q$. Hence every such block has size at most $3q=n/2$. Therefore,
\begin{align*}
\Phi_p(B)
&\le 1+\frac{3q-1}{3q}\left(1-\frac13\right)\\
&= \frac53-\frac{2}{9q}
< \frac53-\frac1n.
\end{align*}

\item Case 4: $B$ contains none of the supported ranks. Then $p(B)=0$. Such a block must lie entirely in one of the gaps determined
by $2q$, $3q$, and $4q$. The largest such gap is $\{4q+1,\ldots,6q\}$ of size $2q=n/3$. Consequently,
\begin{align*}
\Phi_p(B)
&\le 1+\frac{2q-1}{4q}\\
&= \frac32-\frac{1}{4q}
< \frac53-\frac1n.
\end{align*}
\end{itemize}
Overall, the largest block value is attained by $B^\star=\{2q+1,\ldots,6q\}$ and Theorem~\ref{thm:formula} implies that the exact approximation ratio of the mechanism is $5/3 - 1/n$.
\hfill $\qed$
\end{example}

We now turn from evaluating a fixed rank lottery to optimizing over all rank distributions. We first show that, for two and three agents, rank lotteries can achieve social optimality.

\begin{theorem}[Optimality for two and three agents]\label{thm:small-n}
For $n=2$, every rank lottery is socially optimal. For $n=3$, the unique optimal rank lottery is the deterministic median mechanism.
\end{theorem}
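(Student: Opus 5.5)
The plan is to read both claims directly off the block formula of Theorem~\ref{thm:formula}, which gives $\rho(M_p)=\max_{B\in\mathcal B_n}\Phi_p(B)$, where $\mathcal B_n$ consists of the non-empty proper consecutive blocks. Because a rank lottery always outputs some location, we always have $\rho(M_p)\ge 1$. A mechanism is therefore socially optimal exactly when $\Phi_p(B)\le 1$ for every $B\in\mathcal B_n$. By~\eqref{eq:blockvalue}, this is equivalent to requiring that $(|B|-1)(1-p(B))=0$ for every block with $|B|\ge 2$, i.e.\ that $p(B)=1$ for every such block.

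For $n=2$, every proper non-empty block is a singleton, namely $\{1\}$ or $\{2\}$. For any $p$, each such block has $\Phi_p(B)=1$. Hence $\rho(M_p)=1$ for all $p$, so every rank lottery is socially optimal. This case needs no further work.

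For $n=3$, the members of $\mathcal B_3$ are the three singletons, which have value $1$, and the two blocks of size two, $\{1,2\}$ and $\{2,3\}$. For these two blocks,
\[
\Phi_p(\{1,2\})=1+\tfrac{1}{1}(1-p_1-p_2)=1+p_3,\qquad \Phi_p(\{2,3\})=1+p_1 .
\]
It follows that $\rho(M_p)=1+\max\{p_1,p_3\}$. This quantity equals $1$ if and only if $p_1=p_3=0$, i.e.\ $p=(0,1,0)$, which is the deterministic median. For every other $p$ the ratio is strictly larger than $1$, and this establishes both the optimality and the uniqueness.

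The step I expect to need the most care is the claim that optimal means ratio exactly $1$, which requires knowing that the value $1$ is attainable within the class. The median achieves ratio $1$ by the computation above, so the best rank lottery has ratio $1$. Uniqueness then follows because Theorem~\ref{thm:formula} guarantees that any $p$ with $p_1>0$ or $p_3>0$ has ratio exactly $1+\max\{p_1,p_3\}>1$. That ratio is attained by an explicit canonical profile, so such a $p$ is not merely weakly worse. One could also check directly that the median is optimal on three points under the truncated metric, but this is already implied by the formula.
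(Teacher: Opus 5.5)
Your proposal is correct and follows the paper's proof essentially verbatim. For $n=2$ both use that $\mathcal B_2$ contains only singletons; for $n=3$ both compute $\Phi_p(\{1,2\})=1+p_3$ and $\Phi_p(\{2,3\})=1+p_1$, giving $\rho(M_p)=1+\max\{p_1,p_3\}$, which equals $1$ exactly when $p=(0,1,0)$.
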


\begin{proof}
For $n=2$, the family $\mathcal B_2$ consists only of the two singleton
blocks. Since every singleton block $B$ satisfies
\[
\Phi_p(B)
=
1+\frac{|B|-1}{2-|B|}(1-p(B))
=1,
\]
Theorem~\ref{thm:formula} gives
\[
\rho(M_p)=1
\]
for every rank distribution $p$. Thus every rank lottery is optimal when there are two agents.

For $n=3$, the family $\mathcal B_3$ consists of the three singleton
blocks and the two consecutive blocks of size two. For every singleton
$B=\{r\}$, we have
\[ 
\Phi_p(\{r\}) = 1+\frac{1-1}{3-1}(1-p_r) =1.
\]
For the two blocks of size two, using $p_1+p_2+p_3=1$, we obtain
\begin{align*}
\Phi_p(\{1,2\})
&=
1+\frac{2-1}{3-2}\bigl(1-p_1-p_2\bigr)
=1+p_3,\\
\Phi_p(\{2,3\})
&=
1+\frac{2-1}{3-2}\bigl(1-p_2-p_3\bigr)
=1+p_1.
\end{align*}
Hence, Theorem~\ref{thm:formula} yields
\[
\rho(M_p)
=
\max\{1,1+p_1,1+p_3\}
=
1+\max\{p_1,p_3\}.
\]
Since $p_1,p_3\ge0$, this ratio is minimized at $1$, and equality holds if and only if $p_1=p_3=0$.
Because $p_1+p_2+p_3=1$, the unique optimal rank distribution is therefore $p=(0,1,0)$, which is the deterministic median mechanism.
\end{proof}

The situation changes for instances with at least four agents: social optimality can no longer be achieved by a rank lottery. We next determine the best possible approximation ratio for every $n\ge4$ and present rank lotteries that attain it.

\begin{table}[t]
\centering
\caption{An optimal central lottery for each residue class of $n \geq 4$.}
\label{tab:central-lotteries}
\small
\begin{tabular}{@{}ll@{}}
\toprule
Population size & Central lottery \\
\midrule
$n=4q$ & uniform on $q+1,\ldots,3q$ \\
$n=4q+1$ & equal mixture of the uniform lotteries on \\
& $q+1,\ldots,3q$ and $q+2,\ldots,3q+1$ \\
$n=4q+2$ & uniform on $q+2,\ldots,3q+1$ \\
$n=4q+3$ & uniform on $q+2,\ldots,3q+2$ \\
\bottomrule
\end{tabular}
\end{table}

\begin{theorem}[Optimal rank lotteries]\label{thm:optimal}
For every $n\ge4$,
\[
\min_p\rho(M_p)
=\frac32-\frac{1}{2\lfloor n/2\rfloor}.
\]
The lotteries in Table~\ref{tab:central-lotteries} attain this value.
\end{theorem}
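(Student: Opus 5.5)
Write $m=\lfloor n/2\rfloor$ and $\rho^\ast=\frac32-\frac1{2m}=1+\frac{m-1}{2m}$. By Theorem~\ref{thm:formula} the statement is purely combinatorial: we must show that $\max_{B\in\mathcal B_n}\Phi_p(B)\ge\rho^\ast$ for every $p$, and that the lotteries of Table~\ref{tab:central-lotteries} satisfy $\Phi_p(B)\le\rho^\ast$ for every non-empty proper consecutive block $B$. Both directions reduce to bookkeeping on $p(B)$ as $B$ ranges over intervals of ranks.

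\emph{Lower bound.} We will use only the prefix $L=\{1,\dots,m\}$ and the suffix $R=\{n-m+1,\dots,n\}$. Both have size $m$, are proper blocks, and are disjoint, so $p(L)+p(R)\le 1$. Hence one of them, say $B$, has $p(B)\le\frac12$. Since $|B|=m$ and $n-|B|=n-m=\lceil n/2\rceil$, this gives
\[
\Phi_p(B)\ge 1+\frac{m-1}{\lceil n/2\rceil}\cdot\frac12 .
\]
For even $n$ this is exactly $\rho^\ast$. For odd $n$ it falls short, since the denominator is $m+1$ rather than $m$. In that case I would instead use the two blocks of size $m+1=\lceil n/2\rceil$, namely $\{1,\dots,m+1\}$ and $\{m+1,\dots,n\}$. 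These overlap only at the median rank $m+1$, so $p(L')+p(R')=1+p_{m+1}$. I would combine them with the size-$m$ blocks $\{1,\dots,m\}$ and $\{m+2,\dots,n\}$, whose masses sum to $1-p_{m+1}$. A suitable weighted average of the four resulting inequalities $\Phi\le\lambda$ should force $\lambda\ge\rho^\ast$. Concretely, the size-$(m+1)$ blocks have coefficient $\frac{m}{m}=1$ and the size-$m$ blocks have coefficient $\frac{m-1}{m+1}$, and eliminating $p_{m+1}$ should give $\rho^\ast=1+\frac{m-1}{2m}$. Checking that this elimination lands exactly on $\rho^\ast$ is the main computation I expect on the lower-bound side.

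\emph{Upper bound.} The lotteries in the table are uniform, or mixtures of uniforms, on a central window of about $n/2$ ranks. For a block $B$ of size $b$ we have $\Phi_p(B)=1+\frac{b-1}{n-b}(1-p(B))$, and for fixed mass this is increasing in $b$. So, as in Example~\ref{rem:uniform}, it suffices for each $k$ to take the largest block containing exactly $k$ support ranks. For a uniform window of $w$ ranks, such a block runs to one end of $\{1,\dots,n\}$ and stops just before the next support rank. Its size is therefore $k+g$, where $g$ is the length of the outer gap, either $q$ or $q+1$ depending on the residue class, and its mass is $k/w$. This reduces the task to verifying the one-variable inequality
\[
\frac{k+g-1}{n-k-g}\Bigl(1-\frac kw\Bigr)\le\frac{m-1}{2m}
\qquad\text{for } 0\le k\le w .
\]
The left side is a ratio of quadratic to linear in $k$. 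I would check the endpoints and show convexity (or that the maximum sits at $k=0$ or at $k\approx w/2$), with equality expected at the balanced block of size $m$ and mass $\frac12$. The $n=4q+1$ mixture needs the same analysis with masses that are half-integers over $2q$. There, the counting of support ranks in a block of given size changes by one at the window boundaries, so I would treat prefix and suffix blocks separately.

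The main obstacle I expect is this upper-bound verification across the four residue classes. In particular, the odd cases have the window offset by one, and blocks slightly larger than $n/2$ must not exceed $\rho^\ast$. I would first try to verify the inequality at the extreme sizes $b=m$ and $b=m+1$, then argue monotonicity in $k$ away from them.
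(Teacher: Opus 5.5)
Your proposal is correct in outline. For the lower bound you take a genuinely different route from the paper, and it is more self-contained. The paper does not derive the lower bound from Theorem~\ref{thm:formula}; it cites the earlier construction of Deligkas et al.\ with an ``uneven split'' for odd $n$. You read the bound directly off the block formula.

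\textbf{Lower bound.} The even case is immediate. For $n=2m+1$, your four-block elimination does land exactly on $\rho^\ast$. With $t=p_{m+1}$, weight each size-$(m+1)$ inequality $\lambda\ge 2-p(\cdot)$ by $m-1$ and each size-$m$ inequality by $m+1$. Summing gives $4m\lambda\ge (m-1)(3-t)+2(m+1)+(m-1)(1+t)=6m-2$.

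You do not need the median rank at all, though. The complementary blocks $\{1,\dots,m\}$ and $\{m+1,\dots,n\}$ have masses $x$ and $1-x$, hence values $1+\frac{m-1}{m+1}(1-x)$ and $1+x$. Weighting these by $m+1$ and $m-1$ gives $2m\lambda\ge 3m-1$ at once. This is presumably the uneven split the paper has in mind.

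\textbf{Upper bound.} Here you follow the paper's route: at fixed mass $\Phi_p$ grows with $|B|$, so only edge blocks matter. You index blocks by the number $k$ of support ranks rather than by $|B|$. The inequality you isolate is the right one, but you leave it unverified, and your first suggested tool cannot prove it.

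The left side is a concave quadratic over a positive affine function. So it is quasi-concave with an \emph{interior} maximum, not convex. Over real $k$ that maximum strictly exceeds $\rho^\ast-1$, so integrality must be used.

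After cross-multiplying, the slack factors, up to positive factors, as follows:
\begin{itemize}
\item For $n=4q$, with $b=k+q$: $(b-2q)(2b-4q-1)$.
\item For $n=4q+2$: $(k-q)\bigl((2q+1)k-2q^2\bigr)$.
\item For $n=4q+3$: $(k-q)(k-q-1)$.
\item For $n=4q+1$: a prefix of size $q+j$ with $1\le j\le 2q$ has mass $\frac{2j-1}{4q}$, and the slack is $2(j-q)(j-q-1)$. Prefixes of size at most $q$ have mass $0$ and need a separate one-line check.
\end{itemize}
Each quadratic has its two roots at most $1$ apart, with no integer strictly between them, so it is non-negative at integers.

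Equality holds at $|B|=m$, and also at $|B|=m+1$ when $n$ is odd. So for odd $n$ the extremal blocks do not have mass $\frac12$, contrary to your expectation.

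Your fallback, checking $b\in\{m,m+1\}$ and invoking unimodality, also needs care. For $n=4q+2$ the real maximizer lies in $(m-1,m)$, so you must compare with $b=m-1$ as well. The factorizations above avoid this issue.
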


\begin{proof}
The lower-bound construction of Deligkas et al.~\cite{deligkas2024}, with the analogous uneven split for odd $n$, yields an approximation ratio of at least
\[
\frac32-\frac{1}{2\lfloor n/2\rfloor}
\]
for every rank distribution $p$, once we account for the fact that, whenever an isolated agent is selected, that agent incurs zero cost. It therefore remains only to prove a matching upper bound.

We present the complete argument for $n=4q$, where $q\ge1$.
Consider the first distribution in Table~\ref{tab:central-lotteries}, namely
\[
p_r=
\begin{cases}
\dfrac{1}{2q}, & q+1\le r\le3q,\\[1mm]
0, & \text{otherwise}.
\end{cases}
\]
In words, the mechanism chooses uniformly among the middle $2q$ ranks.
By Theorem~\ref{thm:formula}, it suffices to show that
\[
\Phi_p(B)\le \frac32-\frac{1}{4q}
\]
for every non-empty proper consecutive block $B$.

Fix such a block $B$. Since
\[
\Phi_p(B)
=
1+\frac{|B|-1}{4q-|B|}(1-p(B)),
\]
among blocks of the same size, $\Phi_p(B)$ is maximized when $p(B)$ is minimized. Since $p$ is uniform on the consecutive ranks $\{q+1,\ldots,3q\}$ and zero elsewhere, the minimum probability mass among blocks of a given size is attained by placing the block as far as possible toward one of the two ends of the rank order. By symmetry, it therefore suffices to consider a prefix $B = \{1,\ldots,|B|\}$. We distinguish three cases according to how $B$ intersects the support
of $p$.

\begin{itemize}
\item {\bf Case 1: $|B|\le q$.}
Then $B$ contains no rank in the support of $p$, so $p(B)=0$. Hence
\[
\Phi_p(B) = 1+\frac{|B|-1}{4q-|B|}.
\]
This expression is increasing in $|B|$, and therefore, since $|B| \le q$, 
\[
\Phi_p(B) \le 1+\frac{q-1}{3q} = \frac43 - \frac{1}{3q} \leq \frac32 - \frac{1}{4q}. 
\]

\item {\bf Case 2: $q+1\le |B|\le 3q-1$.}
Then $B$ contains $|B|-q$ ranks in the support of $p$, and hence
\[
p(B)=\frac{|B|-q}{2q}.
\]
Substituting in the definition of $\Phi_p(B)$, we get
\[
\Phi_p(B) = 1+\frac{(|B|-1)(3q-|B|)}{2q(4q-|B|)}.
\]
To show that this is at most $3/2-1/(4q)$, it suffices to verify the inequality
\[ 2(|B|-1)(3q-|B|) \le (2q-1)(4q-|B|).
\]
The difference between the right- and left-hand sides is
\[
(|B|-2q)(2|B|-4q-1),
\]
which is non-negative because $|B|$ is an integer. Indeed, if $|B|\le2q$, both factors are non-positive, whereas if $|B|\ge2q+1$, both of them are positive. Hence, we obtain
\[
\Phi_p(B)\le\frac32-\frac{1}{4q}.
\]

\item {\bf Case 3: $3q\le |B|\le4q-1$.}
Then $B$ contains the entire support of $p$, and so $p(B)=1$, which implies that
\[
\Phi_p(B)=1 \le \frac32-\frac{1}{4q}.
\]
\end{itemize}
We have now established that every non-empty proper consecutive block satisfies
\[
\Phi_p(B)\le\frac32-\frac{1}{4q}.
\]
Moreover, equality is attained by the prefix $B=\{1,\ldots,2q\}$, for which $p(B)=1/2$. Indeed,
\begin{align*}
\Phi_p(B) &= 1+\frac{2q-1}{2q}\left(1-\frac12\right) = \frac32-\frac{1}{4q}.
\end{align*}
Therefore, Theorem~\ref{thm:formula} gives
\[
\rho(M_p)=\frac32-\frac{1}{4q},
\]
which completes the proof for $n=4q$.

The cases $n\in\{4q+1,4q+2,4q+3\}$ follow analogously from the corresponding distributions in Table~\ref{tab:central-lotteries}. Each of these distributions is symmetric, with probabilities weakly increasing from either edge toward the center. Hence, among consecutive blocks of a fixed size, an edge block minimizes $p(B)$ and therefore maximizes $\Phi_p(B)$. Repeating the same three-case calculation gives
\[
\rho(M_p) = \frac32-\frac{1}{2\lfloor n/2\rfloor},
\]
as desired.
\end{proof}

\section{Beyond rank lotteries}\label{sec:beyond}
We now move from rank lotteries and show a lower bound of $9/8$ for any universally strategyproof randomized mechanism.

\begin{theorem}[Unrestricted universal lower bound for four agents]
\label{thm:universal-lb}
For $n=4$, the approximation ratio of any universally strategyproof randomized mechanism is at least $9/8$.
\end{theorem}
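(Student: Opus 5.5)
The plan is a Yao-type averaging argument that works one deterministic mechanism at a time. A universally strategyproof mechanism $M$ is a report-independent lottery over deterministic strategyproof mechanisms $f$, so for any fixed finite family of profiles $\bfs^{(1)},\ldots,\bfs^{(k)}$ with nonnegative weights $w_j$, linearity of expectation gives
\[
\sum_j w_j\,\E[\SC(M(\bfs^{(j)});\bfs^{(j)})]\ \ge\ \min_{f}\ \sum_j w_j\,\SC(f(\bfs^{(j)});\bfs^{(j)}),
\]
where the minimum ranges over deterministic strategyproof $f$. It therefore suffices to exhibit four-agent profiles and weights such that every deterministic strategyproof $f$ satisfies
\[
\sum_j w_j\,\SC(f(\bfs^{(j)});\bfs^{(j)})\ \ge\ \tfrac98\sum_j w_j\,\operatorname{OPT}(\bfs^{(j)}).
\]
Some profile in the family then has ratio at least $9/8$ under $M$. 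We use only the strategyproofness inequalities of $f$ and never anonymity or rank-basedness, so agent identities must be tracked explicitly.

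\textbf{Candidate family.} I would build it around the symmetric ``two-versus-two'' profile $\bfs^{(0)}=(0,0,2,2)$. Here $\operatorname{OPT}=2$, attained only at $y\in\{0,2\}$ up to the truncation structure, and every other location is strictly worse; for example $y=1$ costs $4$. I would pair it with the two asymmetric neighbours obtained by moving one agent halfway, $\bfs^{(1)}=(0,0,1,2)$ and $\bfs^{(2)}=(0,1,2,2)$ (with the appropriate agent index moved). Both still have $\operatorname{OPT}=2$, with the optimum at the heavy side. If necessary I would also add the neighbours reached by a second unilateral move, so that the relevant agents form a chain of single deviations.

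\textbf{Key steps.}
\begin{enumerate}
\item Compute $\SC(y;\cdot)$ as a piecewise-linear function of $y$ for each profile in the family.
\item Write down the strategyproofness inequalities along the chain. If agent $i$ has true location $s_i$ in one profile and $z_i$ in the neighbouring one, then $d(s_i,f(\bfs))\le d(s_i,f(z_i,\bfs_{-i}))$, and symmetrically in the other direction. Each such inequality constrains how far $f(\bfs^{(0)})$ can be from the outputs on the neighbouring profiles.
\item Case on where $f(\bfs^{(0)})$ lies: near $0$, near $2$, or in between. An output in between already costs more than $\operatorname{OPT}$ on $\bfs^{(0)}$. An output near $0$ forces, via the deviation of a right agent to $1$, that $f$ on that neighbour cannot move toward the deviator's true location. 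That in turn pushes the neighbour's output away from its own optimum, and symmetrically for an output near $2$.
\item Choose the weights $w_j$ so that each branch yields total weighted cost at least $9$ against a weighted optimum of $8$, for example weight $2$ on $\bfs^{(0)}$ and $1$ on each neighbour. If the losses in the branches are not exactly balanced, solve the small linear program over the weights.
\end{enumerate}

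\textbf{Main obstacle.} The hard part is making the deviation inequalities bite without anonymity. The mechanism may treat the two left agents differently, and it may output non-agent locations. I would first try a cyclic or symmetric arrangement of deviations in which every agent is used once as the deviator, so that no labelling of agents lets $f$ escape. In parallel, I would exploit the truncation: a deviator at distance $\ge 1$ from an output is indifferent among all locations that are far from her, so the constraint only matters when the output lies within distance $1$. Getting exactly $9/8$ rather than a weaker constant will likely require tuning the intermediate positions, for example moving the agent to $1/2$ instead of $1$, and optimizing the resulting piecewise-linear bound.
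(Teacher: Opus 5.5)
Your averaging step is sound and is how the paper concludes: reduce to deterministic strategyproof $f$ by linearity, then bound a weighted sum of costs. The gap is that the substance of the theorem is the explicit family of profiles and its case analysis, and the family you propose yields no bound at all. Take $P_0=(0,0,2,2)$, $P_1=(0,0,1,2)$, $P_2=(0,1,2,2)$ with outputs $f(P_0)=f(P_1)=0$ and $f(P_2)=2$. These outputs are optimal on all three profiles, each costing $2$. They also satisfy every strategyproofness inequality linking the profiles. For the agent moving between $2$ and $1$ you need $d(2,0)\le d(2,0)$ and $d(1,0)\le d(1,0)$. For the agent moving between $0$ and $1$ you need $d(0,0)\le d(0,2)$ and $d(1,2)\le d(1,0)$. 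So the weighted loss is zero for every choice of weights.

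Your step 3 also runs in the wrong direction. An output near $0$ on $P_0$ forces $f(P_1)\le 1$ or $f(P_1)\ge 3$, which pushes $f(P_1)$ \emph{toward} its own optimum at $0$. The neighbour whose optimum lies on the other side, $P_2$, is reached by an agent moving to $1$. She sits at truncated distance $1$ from both clusters, so her constraints are inert. This is exactly the truncation problem you flag but never resolve.

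The missing idea is a chain of \emph{half-unit} moves, which keeps the deviator's cost below the truncation threshold so the inequalities bind. The chain joins two charged profiles with optima on opposite sides through an uncharged bridge profile. The paper uses $A=(\tfrac12,\tfrac12,\tfrac32,2)$, $C=(0,\tfrac12,\tfrac32,2)$ and $B=(0,\tfrac12,2,2)$. Agent $1$ moves between $A$ and $C$, agent $3$ moves between $C$ and $B$, and $C$ gets weight $0$. Here $\operatorname{OPT}(A)=\operatorname{OPT}(B)=2$, attained at $\tfrac12$ and at $2$ respectively. The proof then cases on $w=f(C)$:
\begin{itemize}
\item If $w\le\tfrac12$, then $d(\tfrac32,w)=1$ forces $d(\tfrac32,y_B)=1$, hence $\SC(y_B;B)\ge\tfrac52$.
\item If $w\ge1$, then $d(0,w)=1$ forces $|y_A|\ge1$, hence $\SC(y_A;A)\ge\tfrac52$.
\item If $\tfrac12<w<1$, agent $1$'s two inequalities pin $y_A=w$, so $\SC(y_A;A)=\tfrac32+w$. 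Agent $3$'s inequality excludes $y_B\in(w,3-w)$, so $\SC(y_B;B)\ge3-w$.
\end{itemize}
In every case $\SC(y_A;A)+\SC(y_B;B)\ge\tfrac92=\tfrac98\cdot 4$. The middle case is where the two losses trade off exactly to produce $9/8$. Without such a configuration and its case analysis, your plan establishes no constant above $1$.
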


\begin{proof}
Consider the following three location profiles: 
\[
A=\left(\tfrac12,\tfrac12,\tfrac32,2\right),\qquad
C=\left(0,\tfrac12,\tfrac32,2\right),\qquad
B=\left(0,\tfrac12,2,2\right).
\]
Before arguing about the performance of any mechanism when given these profiles as input, we make the following observation about the optimal social cost for $A$ and $B$. 

\begin{claim}\label{clm:endpoint-opt}
$\operatorname{OPT}(A)=\operatorname{OPT}(B)=2$.
\end{claim}

\begin{proof}
Let $y$ be an arbitrary location.
\begin{itemize}
\item 
For profile $A$, we can pair one agent at $1/2$ with the agent at $3/2$ and the other agent at $1/2$ with the agent at $2$. Using the triangle inequality, we have
\begin{align*}
\SC(y;A) 
&= \bigg(d(\tfrac12,y)+d(\tfrac32,y)\bigg) + \bigg(d(\tfrac12,y)+d(2,y)\bigg) \\
& \ge d(\tfrac12,\tfrac32)+d(\tfrac12,2) =2.
\end{align*}
Equality is attained at $y=1/2$, and thus $\operatorname{OPT}(A)=2$.

\item 
For profile $B$, we can pair one agent at $2$ with the agent at $0$, and the other agent at $2$ with the agent at $1/2$. 
Using the triangle inequality, we have
\begin{align*}
\SC(y;B)
&= \bigg(d(0,y)+d(2,y)\bigg) + \bigg(d(\tfrac12,y)+d(2,y)\bigg) \\
&\ge d(0,2)+d(\tfrac12,2)
=2.
\end{align*}
Equality is attained at $y=2$, and thus $\operatorname{OPT}(B)=2$.
\end{itemize}
The proof is complete. 
\end{proof}

Now let $f$ be an arbitrary deterministic strategyproof mechanism. Denote the locations that $f$ outputs for the three profiles by $y_A=f(A)$, $w=f(C)$, and $y_B=f(B)$. Profiles $A$ and $C$ differ only in the report of the first agent ($1/2$ in $A$, $0$ in $C$). Since $f$ is strategyproof, the cost of the first agent must not decrease when going from $A$ to $C$ or from $C$ to $A$. Therefore, the following two inequalities must hold:
\begin{equation}\label{eq:lb-left}
d\!\left(\tfrac12,y_A\right)\le d\!\left(\tfrac12,w\right)
\qquad
d(0,w)\le d(0,y_A).
\end{equation}
Similarly, profiles $C$ and $B$ differ only in the report of the third agent ($3/2$ in $C$, $2$ in $B$), and her cost must thus not decrease when going from $C$ to $B$ or from $B$ to $C$, leading to the following two inequalities:
\begin{equation}\label{eq:lb-right}
d\!\left(\tfrac32,w\right)\le d\!\left(\tfrac32,y_B\right)
\qquad
d(2,y_B)\le d(2,w).
\end{equation}
We will show that \eqref{eq:lb-left} and \eqref{eq:lb-right} imply that
\begin{equation}\label{eq:det-sum}
\SC(y_A;A)+\SC(y_B;B)\ge\frac92.
\end{equation}
Assuming \eqref{eq:det-sum} is indeed true, consider any universally strategyproof $M$. By definition, $M$ is a probability distribution over deterministic strategyproof mechanisms. Since \eqref{eq:det-sum} holds for every deterministic
strategyproof mechanism, it remains true after taking expectations, giving us that
\[
\E[\SC(M(A);A)] + \E[\SC(M(B);B)] \ge\frac92.
\]
So, the expected cost of $M$ must be at least $9/4$ for at least one of $A$ and $B$. Since $\operatorname{OPT}(A)=\operatorname{OPT}(B)=2$, this implies that the approximation ratio of $M$ is at least $9/8$. So, to complete the proof it remains to prove \eqref{eq:det-sum}. We do this by considering three ranges for $w$.

\medskip
\noindent 
\textbf{Case 1: $w\le1/2$.}
In this case, $d(\tfrac32,w)=1$. The first inequality in~\eqref{eq:lb-right} therefore implies $d(\tfrac32,y_B)=1$,
and hence
\[
y_B\le\frac12 \qquad\text{or}\qquad
y_B\ge\frac52.
\]
\begin{itemize}
\item 
If $y_B\le1/2$, the two agents at $2$ each incur
cost one, while the agents at $0$ and $1/2$ together incur cost at least
\[
d(0,y_B)+d(\tfrac12,y_B) \ge d(0,\tfrac12) =\frac12
\]
by the triangle inequality. Hence, overall
\[
\SC(y_B;B)\ge\frac52.
\]

\item 
If $y_B\ge5/2$, the agents at $0$ and $1/2$ each
incur cost one, while each of the two agents at $2$ incurs cost at
least $1/2$. So, 
\[
\SC(y_B;B)\ge3.
\]
\end{itemize}
Consequently, in either case, we have that 
\[
\SC(y_B;B)\ge\frac52.
\]
By Claim~\ref{clm:endpoint-opt}, every location has social cost at
least $2$ for profile $A$. Therefore,
\[
\SC(y_A;A)+\SC(y_B;B)
\ge
2+\frac52
=
\frac92.
\]

\medskip
\noindent 
\textbf{Case 2: $1/2<w<1$.}
The two inequalities in~\eqref{eq:lb-left} determine $y_A$ exactly.
Since $d(\tfrac12,w)=w-\frac12$, the first inequality implies that $d(\tfrac12,y_A)\le w-\frac12$.
Because $w-\frac12<1$, there is no truncation in this inequality, and
therefore
\[
\left|y_A-\frac12\right|\le w-\frac12 
\Rightarrow
1-w\le y_A\le w.
\]
On the other hand, since $d(0,w)=w$, the second inequality in~\eqref{eq:lb-left} requires that $d(0,y_A)\ge w$.
Since $y_A\in[1-w,w]\subseteq(0,1)$, we have that $d(0,y_A)=y_A$, and hence $y_A\ge w$. Combining this with $y_A\le w$ gives us that
\[
y_A=w.
\]
Consequently,
\[
\SC(y_A;A)
=
2\left(w-\frac12\right)
+\left(\frac32-w\right)
+1
=
\frac32+w.
\]

We next obtain a lower bound for the cost on profile $B$. Since $d(\tfrac32,w)=\frac32-w$, the first inequality in~\eqref{eq:lb-right} requires $d(\tfrac32,y_B)\ge\frac32-w$. 
Because $3/2-w<1$, this inequality implies that $y_B$ cannot lie
strictly between $w$ and $3-w$. Thus,
\[
y_B\le w
\qquad\text{or}\qquad
y_B\ge3-w.
\]
\begin{itemize}
\item 
If $y_B\le w<1$, the two agents at $2$ each incur cost one. Moreover, by the triangle inequality,
\[
d(0,y_B)+d(\tfrac12,y_B) \ge d(0,\tfrac12) =\frac12.
\]
Hence
\[
\SC(y_B;B)\ge\frac52.
\]
Since $w>1/2$, we have $\frac52\ge3-w$, and thus
\[
\SC(y_B;B)\ge3-w.
\]

\item 
If $y_B\ge3-w>2$, the agents at $0$ and $1/2$ each incur cost one. Furthermore, $y_B-2\ge1-w$, implying that each of the two agents at $2$ incurs cost at least $1-w$. Using also the fact that $w < 1$, we have  
\[
\SC(y_B;B) \ge 2+2(1-w) = 4-2w \ge 3-w. 
\]
\end{itemize}
Hence, in both possible regions for $y_B$,
\[
\SC(y_B;B)\ge3-w.
\]
By combining the bounds for $A$ and $B$, we obtain
\[
\SC(y_A;A)+\SC(y_B;B) \ge \left(\frac32+w\right)+(3-w) = \frac92.
\]

\medskip
\noindent 
\textbf{Case 3: $w\ge1$.}
In this case, $d(0,w)=1$. The second inequality in~\eqref{eq:lb-left} therefore implies $d(0,y_A)=1$, and hence $|y_A|\ge1$.
\begin{itemize}
\item 
If $y_A\le-1$, every agent in $A$ incurs cost one, and hence $\SC(y_A;A)=4$.

\item 
If $1\le y_A\le3/2$, then
\begin{align*}
\SC(y_A;A)
&= 2\left(y_A-\frac12\right) +\left(\frac32-y_A\right) +(2-y_A)
= \frac52.
\end{align*}

\item 
If $3/2\le y_A\le2$, the two agents at $1/2$ each incur cost one,
while
\[
d(\tfrac32,y_A)+d(2,y_A) = \left(y_A-\frac32\right)+(2-y_A) = \frac12.
\]
Hence, $\SC(y_A;A)=5/2$.

\item If $y_A\ge2$, the two agents at $1/2$ contribute two, while the other two agents contribute
at least
\[
d(\tfrac32,y_A)+d(2,y_A) \ge d(\tfrac32,2) =\frac12
\]
by the triangle inequality, giving us that $\SC(y_A;A)\ge 5/2$.
\end{itemize}
Therefore, for every $|y_A|\ge1$,
\[
\SC(y_A;A)\ge\frac52.
\]
By Claim~\ref{clm:endpoint-opt}, every location has social cost at least $2$ for profile $B$. Consequently,
\[
\SC(y_A;A)+\SC(y_B;B) \ge \frac52+2 = \frac92.
\]
The three cases cover every possible value of $w$, proving \eqref{eq:det-sum} and therefore completing the proof.
\end{proof}

\section{Open questions}\label{sec:open}
Several interesting questions remain open. The most immediate one concerns the case $n=4$, where the optimal approximation ratio of universally strategyproof randomized mechanisms currently lies between $9/8$ and $3/2-1/4 = 5/4$. Closing this gap would be a first step toward understanding the power of non-rank-based mechanisms. More generally, determining the optimal approximation ratio of universally strategyproof randomized mechanisms for arbitrary $n$ and studying whether stronger guarantees are possible under truthfulness in expectation remain important directions for future work.

\bigskip

\noindent
\textbf{Declaration of generative AI and AI-assisted technologies.}
During the preparation of this work, the author used OpenAI ChatGPT to assist with proof checking, manuscript organization, and language editing. The author reviewed and edited the content as needed and takes full responsibility for it.

\end{document}